
\documentclass[conference]{IEEEtran}
%


%
\usepackage{ifpdf}

%
\usepackage{cite}

%
\ifCLASSINFOpdf
  \usepackage[pdftex]{graphicx}
\else
\fi
%
%

%
\usepackage{amsmath}
\usepackage{amsfonts}
\usepackage{amssymb}
\usepackage{amsthm}
\usepackage{color}

\usepackage{algorithmic}
\usepackage{array}
\usepackage{url}
\newtheorem{theorem}{Theorem}

\newtheorem{remark}{Remark}

\newcommand{\cV}{\mathcal{V}}
\newcommand{\cE}{\mathcal{E}}
\newcommand{\cB}{\mathcal{B}}
\newcommand{\cR}{\mathcal{R}}
\newcommand{\bL}{\mathbf{L}}
\newcommand{\bP}{\mathbf{P}}



\begin{document}
%
\title{Communication Cost of Transforming a Nearest Plane Partition to the Voronoi Partition}
\author{\IEEEauthorblockN{Vinay A. Vaishampayan}
\IEEEauthorblockA{Engineering Science and Physics Dept.\\City University of New York-College of Staten Island\\Staten Island, NY USA}
\and
\IEEEauthorblockN{Maiara F. Bollauf }
\IEEEauthorblockA{Institute of Mathematics, Statistics and Computer Science\\
University of Campinas, Sao Paulo, Brazil}
}
\maketitle

\begin{abstract}
We consider the problem of distributed computation of the nearest lattice point for a two dimensional lattice.
An interactive model of communication is considered. We address the problem of reconfiguring a specific rectangular partition, a nearest plane, or Babai, partition, into the Voronoi partition.   Expressions are derived for the error probability as a function of the total number of communicated bits. With an infinite number of allowed communication rounds, the average cost of achieving zero error probability is shown to be finite. For the interactive model, with a single round of communication, expressions are obtained for the error probability as a function of the bits exchanged. We observe that the error exponent  depends on the lattice.
\end{abstract}

{\small \textbf{\textit{Index terms}---Lattices, lattice quantization, Communication complexity, distributed function computation, Voronoi cell, Babai cell, rectangular partition.}}


%
\IEEEpeerreviewmaketitle
\section{Introduction}

Given a lattice~\footnote{A lattice is a discrete additive subgroup of $\mathbb{R}^n$. The reader is referred to \cite{SPLAG} for details.} $\Lambda \subset \mathbb{R}^n$, the closest lattice point problem is to find for each $x=(x_1,x_2,\ldots,x_n)\in \mathbb{R}^n$, the point $\lambda^*(x)$ which minimizes
the Euclidean distance $\|x-\lambda\|$, $\lambda \in \Lambda$. Here, we assume that $x_i$ is available at node $S_i$ in a network of nodes and study the communication cost of this search.  We consider an interactive model in which each node $S_i$ communicates with every other node so that every node  such that each node can determine $\lambda^*(x)$.  Since this may not be possible in general, let $\lambda(x)$ denote the lattice point determined by concerned nodes when computation is halted. The objective is to determine the tradeoff between the  communication required  and the probability of error $P_e:=Pr(\lambda(X)\neq \lambda^*(X))$ for a known probability distribution on $X$.

 We will assume that generator matrix $V$ of $\Lambda$ has the upper triangular form $$V=\begin{pmatrix} 1 & \rho \cos \theta \\ 0 & \rho \sin \theta\end{pmatrix}$$ where the columns of $V$ are basis vectors for the lattice.   The associated quadratic form is $f(x,y)=x^2 + 2\rho \cos \theta~ xy + \rho^2 y^2$. It is known that this form is reduced if and only if $2|\rho \cos \theta|  \leq 1 \leq \rho^2 $ and the three smallest values taken by $f$ over integer $u=(x,y) \neq 0$ are $1$, $\rho^2$, and $1-2|\rho \cos \theta| + \rho^2$~see e.g. Th. II, Ch. II, \cite{Cassels:1997}. Based on a result due to Voronoi, Th. 10, Ch. 21,~\cite{SPLAG}, it follows that the relevant vectors, i.e. the vectors which determine the faces of the Voronoi cell, are $\pm (1,0)$, $\pm (\rho \cos \theta, \rho \sin \theta)$ and $\pm (\rho \cos \theta -1,\rho \sin \theta)$. We thus consider lattices with generator matrix $V$ as above, with $\rho \geq 1$. From an additional symmetry, and in order to avoid indeterminate solutions we restrict $\theta$  such that $0 < \rho \cos \theta < 1/2$. Performance at the endpoints $0$ and $1/2$ can be obtained by taking limits. More generally, the generator matrix of the lattice is represented by matrix $V$ with $i$th column  $v_i$, $i=1,2,\ldots,n$. Thus $\Lambda=\{Vu,~u\in \mathbb{Z}^n\}$.  The $(i,j)$ entry of V is $v_{i,j}$, thus $v_i=(v_{1i},v_{2i},\ldots,v_{ni})$. The Voronoi cell $\cV(\lambda)$ is defined as the set of all $x$ for which $\lambda\in \Lambda$ is the closest lattice point.

In a companion paper~\cite{Bollauf:2017} we have developed upper bounds for the communication complexity of constructing a specific rectangular partition for a given lattice along with a closed form expression for the error probability $P_e$. The partition is referred to as a Babai partition and is an approximation to the Voronoi partition for a given lattice. 

%
%
%
%
%

The remainder of the paper is organized as follows. Previous work is presented in Sec.~\ref{sec:previous}, assumptions and a preliminary analysis are presented in Sec.~\ref{sec:largescale}, the interactive model is analyzed and quantizer design is presented  for a single round of communication (Sec.~\ref{sec:onedec}), for unbounded rounds of communication (Sec.~\ref{sec:infinitedec}).  Numerical results and a discussion are in Sec.~\ref{sec:discussion}. A summary and conclusions is provided in Sec.~\ref{sec:summary}.

\section{Previous Work}
\label{sec:previous}
Communication complexity~\cite{Yao:1979},~\cite{KushNis:1997} is the minimum amount of communication required to compute a function  in a distributed setting. Information theoretic characterizations of  communication complexity  are developed for the two node problem   in~\cite{Orlitsky:2001}. Two models are considered: a centralized model, and an interactive model where two messages are exchanged (one round of communication in our model).  Two terminal interactive communication is studied in considerable detail in \cite{MaIshwar:2011}, and the benefit of an unbounded number of messages is demonstrated. An important and relevant contribution  in ~\cite{MaIshwar:2009},~\cite{MaIshwar:2011} is the  the strict benefit that interactive communication provides for the computation of the Boolean AND function. Another stream of related work has origins in asymptotic quantization theory. The problem of fine quantization for detection problems is addressed in~\cite{Poor:1988},~\cite{Benitz:1989} and ~\cite{Gupta:2003}. More recently, the design of fine scalar quantizers for distributed function computation with a squared error distortion measure is considered in~\cite{Misra:2011} and succeeding works. Significant benefits, especially in the interactive setting are obtained.

\section{Preliminary Analysis}
\label{sec:largescale}
\begin{figure}[h!]
\begin{center}
		\includegraphics[height=3.8cm]{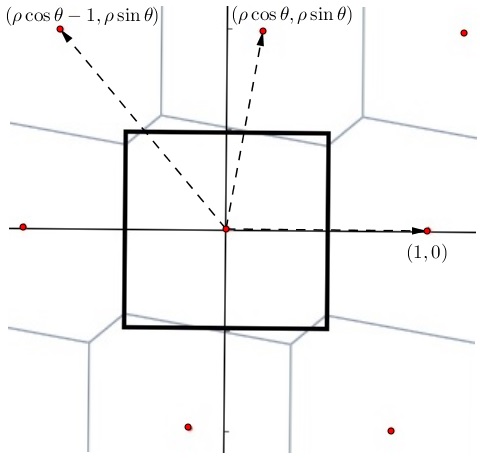}  
\caption{{Voronoi region, Babai partition and three relevant vectors}}
\label{rvectors}
\end{center}
\end{figure}

We consider a two-stage approach for determining $\lambda(x)$.  In Stage-I,~\cite{Bollauf:2017}, a point $\lambda_{np}(x)$ (defined next) is determined, using the nearest plane algorithm~\cite{babai}, and assuming an interactive model. We refer to  $\lambda_{np}(x)$ as the Babai point. When the generator matrix for $\Lambda$ is in upper triangular form,  the nearest plane algorithm determines $\lambda_{np}=Vu$, with $u_i=[(x_i-\sum_{j=i+1}^n v_{ij}u_j)/v_{ii}]$, $i=n,n-1,\ldots,1$ ($[x]$ is the nearest integer to $x$). The nearest plane algorithm partitions the plane into congruent \textit{Babai} cells (rectangles), each of volume $|\det V|$. The Babai cell associated with lattice vector $\lambda$ is denoted $\cB(\lambda)$. Once again we denote the nearest lattice point by $\lambda^*(x)$.

The analysis of $P_{e,I}$, the error at the conclusion of Stage-I, justifies the modeling assumptions made for the Stage-II analysis. Specifically, 
$P_{e,I}=\sum_{\lambda \in \Lambda}Pr(\lambda^*(X) \neq \lambda |X \in \cB(\lambda))Pr(X \in \cB(\lambda))$. 
Since an error occurs if $X$ is closer to some $\lambda' \neq \lambda$, it follows that $P_{e,I}=\sum_{\lambda \in \Lambda}\sum_{\lambda' \neq \lambda}Pr(X \in \cB(\lambda) \bigcap \cV(\lambda')|X \in \cB(\lambda))Pr(X \in \cB(\lambda))$. Assuming that $p(x)$ is approximately constant over each $\cB(\lambda)$\footnote{This is justified under the assumption that the lattice point density is suitably high.}, it follows that $P_{e,I} \approx  \sum_{\lambda \in \Lambda}\sum_{\lambda' \neq \lambda} \mbox{Area}(\cV(\lambda')\bigcap \cB(\lambda)/\mbox{Area}(\cB(\lambda))Pr(X \in \cB(\lambda))$. Since the Babai and Voronoi partitions are invariant under translations by lattice vectors, it follows that $P_{e,I} \approx  \sum_{\lambda' \neq 0} \mbox{Area}(\cV(\lambda')\bigcap \cB(0))/\mbox{Area}(\cB(0))$.

From the above analysis for $P_{e,I}$, and assuming an \textit{interactive model} for Stage-I, it follows that at the end of Stage-I, $\lambda_{np}(X)$ is known to each node $S_i$. Each node thus subtracts off $i$th coordinate $\lambda_{np,i}$ from $X_i$. The result is also referred to as $X_i$ for notational convenience. We will assume that the resulting $X=(X_1,X_2)$ is uniformly distributed over  $\cB(0)$. For the lattice that we consider $\cB(0)=(-1/2,1/2]\times (-(\rho/2) \sin \theta, (\rho/2) \sin \theta]$. Since  $\cB(0)$ has length $L=1$ and height $H=\rho \sin \theta$, we have $p(x_1)=p=1/L$ and $q(x_2)=q=1/H$, where $p,q$ are the marginal pdf's of $X_1$ and $X_2$, respectively. Note that since $\cB(\lambda)$ is rectangular, $X_1$ and $X_2$ are independent. 

Stage-II communication is broken up into \textit{rounds}, one round corresponds to two messages, one from each node in a predefined order. Both orderings, $12$ and $21$  will be considered.

\section{Interactive, Single Round of Communication}
\label{sec:onedec}
At the conclusion of Stage-I, node $S_i$ is in possession of $X_i$, $i=1,2$, and  $X\in \cB(0)$.  We denote the rectangular cells of the partition at the conclusion of Stage-II by $\cR(i),~i=1,2,\ldots,R$. Associated with each cell $\cR(i)$ is a decision $\lambda(i)$. Following steps similar to the analysis above, it follows that $P_{e,II}$, the error probability at the conclusion of Stage-II is given by 
\begin{equation}
P_{e,II}=\sum_{i=1}^R \sum_{\lambda' \neq \lambda(i)}\mbox{Area}(\cR(i)\bigcap \cV(\lambda'))/\mbox{Area}(\cB(0)).
\end{equation}
The optimum decision rule follows immediately: $\lambda(i)=\arg \max_\lambda \mbox{Area}(\cR(i)\bigcap \cV(\lambda))$.

\begin{figure}[htbp] 
   \centering
   \includegraphics[width=2in]{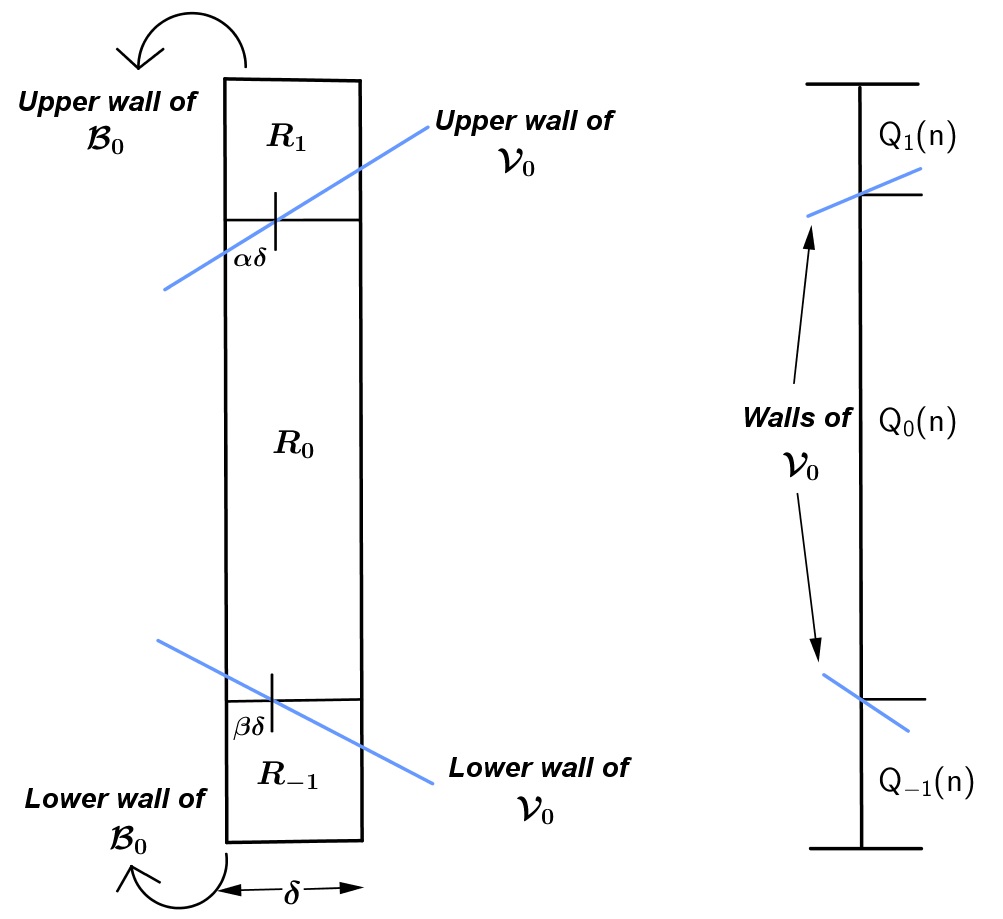} 
   \caption{A  typical vertical strip created by $S_1$ and its partition into three parts by $S_2$ (left). Probability distribution $Q(x)$ which underlies the calculation of $H(U_2|U_1)$ is on the right.}
   \label{fig:verticalslice}
\end{figure}

The scheme for the $12$ order is described first.  To begin, node $S_1$ sends $U_1=i$  to $S_2$ indicating an interval of length $\delta_i$ that $X_1$ lies in. This effectively partitions $(-1/2,1/2]$, the support of $X_1$ into cells of length $\delta_i$ (and equivalently, partitions $\cB(0)$ into vertical strips of widths $\delta_i$), $i=1,2,\ldots,N$. Based on this information, $S_2$ makes a decision $\lambda(U_1,X_2)$ and communicates this decision back to $S_1$ using message $U_2$.  Effectively, $S_2$ partitions each aforementioned vertical strip into at most three parts using at most two horizontal cuts or thresholds. The location of each cut is determined by the location of the appropriate boundary wall of $\cV(0)$. A typical situation is illustrated in Fig.~\ref{fig:verticalslice}. In this figure,  vertical strip $i$ is partitioned into three rectangles, $R_{0}(i)$, $R_{-1}(i)$ and $R_{1}(i)$, where $R_0(i)$ corresponds to points decoded to $\lambda=0$, the other two rectangles are decoded to neighboring points. The probability of error event $\cE$ is given by
\begin{equation}
P_{e,II}=\sum_i \sum_j Pr(\cE|X\in R_{j}(i)),
\end{equation}
where $i$ indexes the strips and $j$ indexes the rectangles within a strip. For the cuts shown in Fig.~\ref{fig:verticalslice}, and assuming the boundary lines have slopes $s_1$ and $s_2$ and bin size $\delta$ we get
\begin{eqnarray}
Pr(\cE|X\in R_{j}(i)) &  =  & \frac{\delta^2}{2 |\det V|}[(\alpha^2 +(1-\alpha)^2)|s_1|+ \nonumber \\
& & ~~~~(\beta^2+(1-\beta)^2)|s_2|] \nonumber \\
& \geq & \delta^2(|s_1|+|s_2|)/4 |\det V|,
\end{eqnarray}
and equality holds when $\alpha=\beta=1/2$.
%
Thus
\begin{equation}
P_{e,II}=(1/4)\sum_{l=1}^2 \sum_{i=1}^N |s_{l,i}|\delta_i^2.
\label{eqn:prob}
\end{equation}

The information rate for Stage-II communication is then $R=H(U_1)+H(U_2|U_1)$.  
The information rates are calculated next. Node $S_1$ sends $H(U_1)$ bits where
\begin{equation}
H(U_1)=\sum_{i=1}^N (\delta_i/L) \log_2 (L/\delta_i).
\label{eqn:rate1}
\end{equation}
Node $S_2$ sends $H(U_2|U_1)$ bits which is obtained by averaging the entropy $H(Q(x_1))$ of probability distribution $Q(x)=(Q_{-1}(x),Q_0(x), Q_1(x))$ over bins of $X_1$ by
\begin{equation}
H(U_2|U_1)= \sum_{i=1}^N (\delta_i/L) H(Q(x_{i})),
\label{eqn:rate2}
\end{equation}
where $x_{i}$ is, say, the midpoint of  the $i$th bin.
Here $Q_1(x), Q_{-1}(x), Q_0(x)$ are the probabilities that $X_2$ exceeds the upper threshold, is smaller than the lower threshold and lies in between the two thresholds, respectively, given $X_1=x$.

\begin{figure}[htbp] 
   \centering
   \includegraphics[width=3.1in]{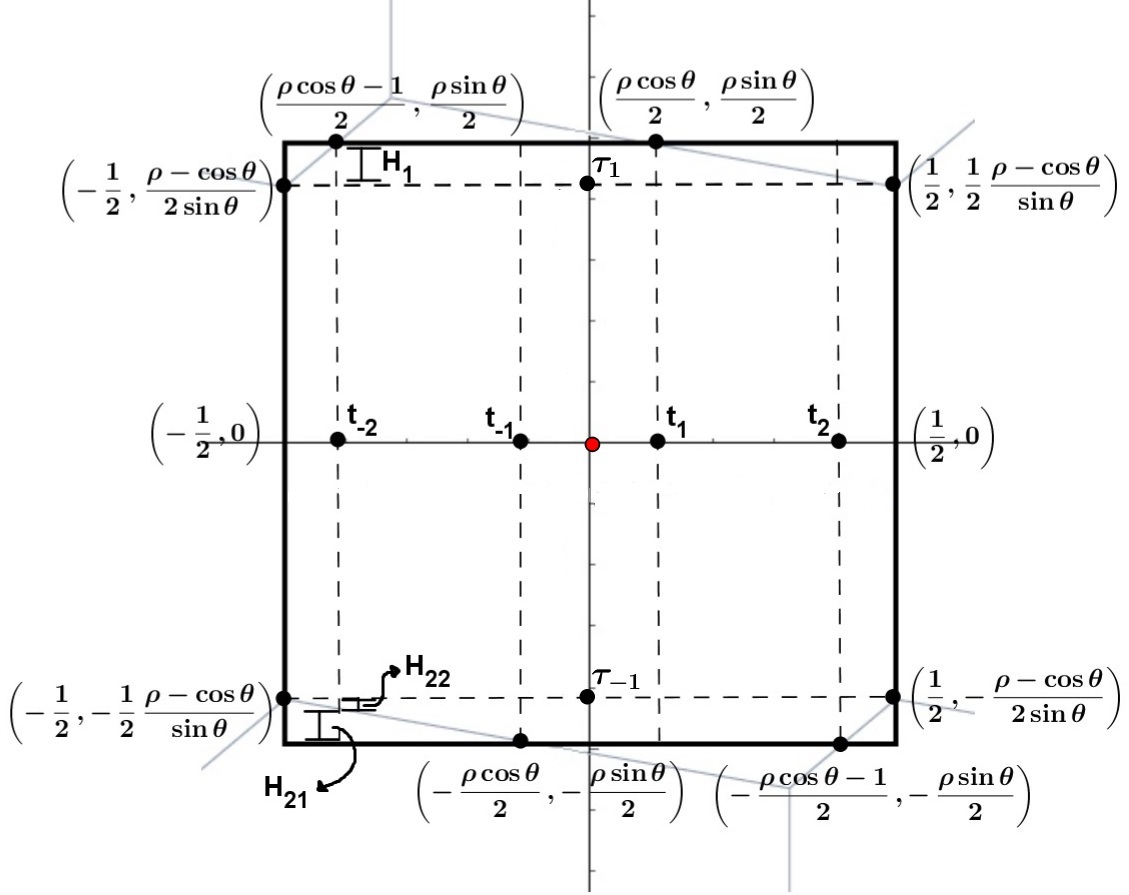} 
   \caption{Babai and Voronoi cells, with key points labeled. $x_1,x_2$ are the horizontal, vertical  coordinates, resp.}
   \label{fig:detailedGeom}
\end{figure}
We now specialize the analysis to $\cV(0)$ and $\cB(0)$ for the given lattice. The geometry of the lattice, with all the significant boundary points, lengths, heights, and slopes  is shown in Fig.~\ref{fig:detailedGeom}. We identify four thresholds $t_{-2}=(\rho \cos \theta-1)/2$, $t_{-1}=(-\rho \cos \theta)/2$, $t_1=-t_{-1}$ and $t_2=-t_{-2}$ and five intervals  $I_{-2}=(-1/2, t_{-2}]$, $I_{-1}=(t_{-2},t_{-1}]$, $I_0=(t_{-1},t_1]$, $I_1=(t_1,t_2]$ and $I_2=(t_2,1/2]$. We partition $I_{-2}$ into $N_2$ equal-length intervals, $I_{-1}$ into $N_1$ equal-length intervals, $I_0$ into $1$ interval, $I_1$ into $N_1$ equal-length intervals and  $I_2$ into $N_2$ equal-length intervals. The lengths of the intervals $I_0$, $I_1$ and $I_2$ are denoted $L_0$, $L_1$ and $L_2$, respectively and $L=L_0+2L_1+2L_2$. Let $\bL=(L_0,L_1,L_1,L_2,L_2)/L$. Note that $\bL$ behaves  like a probability distribution.
Also from Fig.~\ref{fig:detailedGeom}, $H_1=\cos \theta (1-\rho \cos \theta)/2 \sin \theta$, $H_{22}=\rho \cos^2 \theta/2 \sin \theta$ and $H_{21}=\cos \theta (1-2 \rho \cos \theta)/2 \sin \theta$.

From (\ref{eqn:prob}), it follows that
\begin{equation}
P_{e,II}=\alpha_1/N_1+\alpha_2/N_2,
\end{equation}
where $\alpha_1=L_1(H_1+H_{22})/2|\det V|$, $\alpha_2=H_{21}L_2/2|\det V|$.
From (\ref{eqn:rate1}) it follows that 
\begin{equation}
H(U_1)=H(\bL)+(2L_1/L)\log_2 N_1+(2L_2/L) \log_2 N_2.
\end{equation}
In order to calculate $H(U_2|U_1)$, we write $U_1=(V,W)$, where $V$ identifies the interval $I_{V}$, (one of $I_{-2},I_{-1},I_0,I_1,I_2$) in which $X$ lies and $W$ identifies the bin index, relative to $V$. Thus from (\ref{eqn:rate2}) we obtain
\begin{eqnarray}
\lefteqn{H(U_2|U_1)=} & \nonumber \\
& \frac{2L_2 }{N_2}\sum_{w=1}^{N_2} H(U_2|V=2, W=w) + \nonumber \\
& ~~\frac{2L_1}{N_1} \sum_{w=1}^{N_1} H(U_2|V=1, W=w).
\end{eqnarray}
The objective is to minimize $P_{e,II}$ over $N_1$ and $N_2$ subject to the constraint that $H(U_2|U_1)+H(U_1) \leq R$. We observe that the term $H(U_2|U_1)$ is weakly dependent on $N_1$ and $N_2$ (see (\ref{eqn:weak}) below). Thus we minimize $P_{e,II}$ with a constraint on $H(U_1)$. An approximate parametric solution to this optimization problem in terms of $N_2$ is given by $N_1(N_2)=\lceil \frac{\alpha_1L_2N_2}{\alpha_2L_1}\rceil$. In terms of $N_2$ we then obtain $P_e=\alpha_1/N_1(N_2)+\alpha_2/N_2$ and $R=H(U_2|U_1) +H(\bL)+ (2L_1/L)\log_2 N_1(N_2) + (2L_2/L) \log_2 N_2$. We note here that $\alpha_2L_1/\alpha_1 L_2 < 1$ for $\pi/3 \leq \theta \leq \pi/2$. Thus $N_1 \leq N_2$ and $N_1 > 1$ only if $N_2 > \alpha_2 L_1/\alpha_1 L_2$.

\subsection{Asymptotic Analysis}
\label{sec:fine}
We study the behavior of $P_{e,II}$ and $R$ as $N_2 \rightarrow \infty$. Based on the information presented, it follows immediately that 
\begin{eqnarray}
P_{e,II} & = & \frac{\alpha_2}{N_2}(1+\frac{L_1}{L_2})(1+o(1)) \nonumber \\
R & = & H(U_2|U_1)+ H(\bL)+(2L_1/L) \log_2 \left( \frac{\alpha_1 L_2}{\alpha_2 L_1}\right) + \nonumber \\
& & ~~~(2(L_1+L_2)/L) \log_2 N_2 + o(1),
\end{eqnarray}
where $\lim_{N_2 \rightarrow \infty} o(1)=0$. Since
\begin{eqnarray}
\kappa & := & \lim_{N_2 \rightarrow \infty} H(U_2|U_1) \nonumber \\
& = &  (2/L) \int_{-1/2}^{t_{-1}} H(Q(x))dx  
\label{eqn:weak}
\end{eqnarray}
it follows  that
\begin{eqnarray}
\lefteqn{\lim_{R \rightarrow \infty} P_{e,II} 2^{LR/2(L_1+L_2)}=} & \nonumber \\
& \alpha_2 \left(1+\frac{L_1}{L_2}\right)\left(\frac{\alpha_1L_2}{\alpha_2L_1}\right)^{\frac{L_1}{(L_1+L_2)}}2^{\frac{L(\kappa+H(\bL))}{2(L_1+L_2)}}.
\end{eqnarray}
The above expression is in terms of geometric parameters of $\cB(0)$. An expression in terms of probabilities associated with $\cB(0)$ is perhaps more intuitive expression for information theorists and is obtained by defining $\bP=(P_0,P_1,P_1,P_2,P_2)$, with $P_i=L_i/L$, $i=0,1,2$. In terms of the $P_i$'s we obtain 
\begin{eqnarray}
\lefteqn{\lim_{R \rightarrow \infty} P_{e,II} 2^{R/(1-P_0)}=} & \nonumber \\
& \alpha_2 \left(1+\frac{L_1}{L_2}\right)\left(\frac{\alpha_1L_2}{\alpha_2L_1}\right)^{\frac{L_1}{(L_1+L_2)}}2^{\frac{(\kappa+H(\bP))}{(1-P_0)}}.
\end{eqnarray}
Observe that $P_0=1-\rho \cos \theta$. Thus for $\rho=1$ and $\theta \in (\pi/3,\pi/2)$, the rate at which $P_{e,II}$ decays to zero depends on $\theta$ and is maximum when $\theta \rightarrow \pi/3$.

We note here that identical results are obtained using the heavier machinery of point density functions. We have chosen to present the work using a simpler approach.

\subsection{Interactive: Single Round, Reversed Steps}
\label{sec:irev}
Analysis is now presented for  $21$ order of communication. We will summarize the description of the quantizer, and present the final results, since the derivation is similar. In fact, the derivation for this case is simpler.
The support for $X_2$ is partitioned into $2N+1$ bins. With reference to Fig.~\ref{fig:detailedGeom},  a single large bin spans the interval $J_0:=(\tau_{-1},\tau_1]$. The interval $J_{-1}:=(-\rho \sin \theta/2, \tau_{-1}]$ is partitioned into $N$ intervals of equal length $\Delta$. The same holds for the interval $J_{1}=(\tau_1,\rho \sin \theta/2]$. Equal bin sizes are justified by symmetry. Observe that there is only a single step size parameter here, as opposed to the 12 case, where two step sizes were called for. With $H=\rho \sin \theta$, the vertical ($X_2$) dimension of $\cB(0)$ and with $H_1$ as in Fig.~\ref{fig:detailedGeom}, let $H_0:=(H-2H_1)$. Let $Q:=(H_0/H, H_1/H,H_1/H)$ and $Q_0=H_0/H$. 

$S_2$ sends $U_2$, the index of the bin that $X_2$ lies in, and partitions $\cB(0)$ into horizontal strips. $S_1$ then partitions each horizontal strip into at most three parts using at most two vertical cuts or thresholds, referred to as the left and right thresholds, and sends $U_1$ to $S_2$. For a given $x_2$, let $P_{-1}(x_2)$ be the probability that $X_1$ lies to the left of the left threshold ($U_1=-1$), $P_1(x_2)$ the probability that $X_1$ lies to the right of the right threshold ($U_1=1$) and $P_0(x_2)$ the probability that $X_1$ lies in between the two thresholds ($U_1=0$). Let $P(x)=(P_{-1}(x),P_0(x),P_1(x))$. With an equivalent definition of $\kappa$, namely,
\begin{eqnarray}
\kappa & := & \lim_{N \rightarrow \infty} H(U_1|U_2) \nonumber \\
& = &  (2/H) \int_{-\rho \sin \theta/2}^{\tau_{-1}} H(P(x))dx, 
\end{eqnarray}
it follows that the total number of bits sent is given by
\begin{equation}
R=H(Q)+(1-Q_0)\log_2 N +\kappa.
\end{equation}
and 
\begin{equation}
P_{e,II}=\beta/N
\end{equation}
with $\beta=(1/2)((2L_2+L_1)/L )(H_1/H)$.
Taking limits we obtain
\begin{equation}
\lim_{N\rightarrow \infty} P_{e,II}2^{R/(1-Q_0)}=\beta 2^{(H(Q)+\kappa)/(1-Q_0)}.
\end{equation}

\section{Interactive: Infinite Rounds}
\label{sec:infinitedec}
\begin{figure}[h!]
\begin{center}
		\includegraphics[height=3.8cm]{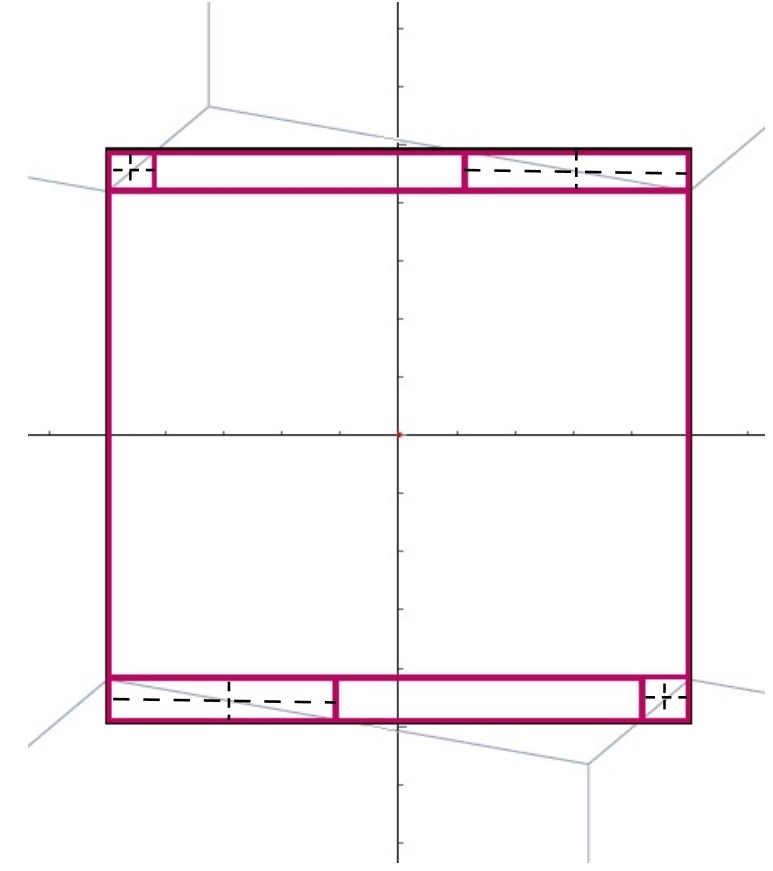}  
\caption{Red solid lines show partition after the first round of communication. Dashed lines are created in the second round of communication. }
\label{fig:pi}
\end{center}
\end{figure}

We now analyze the interactive model in which an infinite number of communication rounds are allowed. Node $S_2$ communicates first. In Round-1, Node $S_2$
partitions the support of $X_2$ into three intervals as in Sec.~\ref{sec:irev} (see Figs.~\ref{fig:pi} and \ref{fig:detailedGeom}),  $J_{-1}$,  and $J_0$, and $J_1$.  Let random variable $U_2$ be the index of the interval in which  $X_2$ lies. In Round-1, upon receiving $U_2$ and if  $U_2=1$, node $S_1$ partitions the support of $X_1$ into three intervals  $I_{-1}=(-1/2,t_{-2}]$, $I_0=(t_{-2},t_1]$ and $I_1=(t_1,1/2]$ (see Fig.~\ref{fig:detailedGeom}). If $U_2=-1$, the support of $X_2$ is partitioned into intervals $-I_1,-I_0, -I_{-1}$. If $U_2=0$, no partitioning step is taken. Random variable $U_1$ describes the interval in which $X_1$ lies. Let $Pr(U_2=i)=:Q_i$, $i=-1,0,1$. Let $P_i=Pr(U_1=i|U_2=1)$, $i=-1,0,1$. Let $Q=(Q_0,Q_1,Q_2)$ and $P=(P_0,P_1,P_2)$.

We assume that for every round, upon sending $U_i$, node $S_i$  updates $X_i$ by subtracting the lower endpoint of the interval that it lies in.

The partition of $\cB(0)$ into rectangular cells after a single, and after two rounds of communication is  shown in Fig.~\ref{fig:pi}. Define a rectangular cell to be \textit{error-free} if its interior does not contain a boundary of $\cV(0)$. 
Of the seven rectangles in the partition at the conclusion of Round-1, all but four are error-free. If $X=(X_1,X_2)$ lies in an error-free rectangle, communication halts after Round-1. Else a second round of communication occurs, during which a total of 2 bits are communicated. This process of partitioning and communication continues until each node determines that $X$ lies in an error free rectangle of the current partition.   When the algorithm halts, $P_{e,II}=0$. Let $N(X)$, $R(X)$  denote the number of rounds, and number of bits communicated, respectively, when the algorithm halts. Let $\bar{R}=E[R(X)]$ and $\bar{N}=E[N(X)]$ denote averages over $X$.
\begin{theorem}
For the interactive model with unlimited rounds of communication, a nearest plane partition can be transformed into the Voronoi partition using,  on average, a finite number of bits and rounds of communication. Specifically, 
\begin{equation}
\bar{R} =    H(Q)+(1-Q_0)H(P) + 4(1-P_0)(1-Q_0)
\label{eqn:avrateinf}
 \end{equation}
 and
 \begin{equation}
\bar{N}  =  1+2(1-P_0)(1-Q_0).
\end{equation}
\end{theorem}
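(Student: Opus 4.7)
The plan is to split $\bar{R}$ and $\bar{N}$ into the Round~1 cost and the aggregate Round~$k\geq 2$ cost, and to exploit a self-similar geometry that makes each Round~$k\geq 2$ a simple bisection.

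Round~1 is read off directly from the scheme. $S_2$ sends $U_2\sim Q$, contributing $H(Q)$ bits; then, precisely when $U_2\neq 0$ (probability $1-Q_0$), $S_1$ answers with $U_1$ of conditional distribution $P$, contributing a further $(1-Q_0)H(P)$ bits on average. The resulting partition of $\cB(0)$ has $1+3+3=7$ rectangular cells; inspection of $\cV(0)$ (Fig.~\ref{fig:detailedGeom}) shows that the center strip $J_0$ and the two cells $I_0\times J_{\pm 1}$ are error-free, while each of the four corner cells $I_{\pm 1}\times J_{\pm 1}$ is crossed by a single face of $\partial\cV(0)$. Hence the probability that Round~1 does not halt the algorithm is the combined mass of the four corner cells, which factors as $(1-Q_0)(1-P_0)$.

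The heart of the argument is the geometric fact that in each corner cell the relevant Voronoi face is precisely the diagonal of that cell, meeting it in two opposite corners. For the top-right cell $I_1\times J_1$ this follows by verifying that the perpendicular bisector of $(0,0)$ and $(\rho\cos\theta,\rho\sin\theta)$ passes through $(t_1,\rho\sin\theta/2)$ and $(1/2,\tau_1)$; the other three cells are handled by point-symmetry through the origin together with an analogous calculation for the bisector of $(0,0)$ and $(\rho\cos\theta-1,\rho\sin\theta)$. Given this, the natural strategy for Rounds~$k\geq 2$ is that each node transmits one bit halving its current coordinate interval. The four resulting sub-rectangles then carry equal conditional probability $1/4$, so the two bits are full-entropy and the round costs exactly $2$ bits; the two sub-rectangles on the parent's diagonal have the Voronoi face as \emph{their} diagonal (so they remain bad and self-similar), while the two off-diagonal sub-rectangles contain no part of the face (so they are error-free). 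Consequently, conditional on reaching Round~$k$ for $k\geq 2$, the probability of continuing to Round~$k+1$ is exactly $1/2$.

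Setting $p:=(1-Q_0)(1-P_0)$, the probability that Round~$k+1$ occurs equals $p\cdot 2^{-(k-1)}$ for $k\geq 1$, and each such round costs $2$ bits. Summing the resulting geometric series yields $\bar{N}=1+2p$ and $\bar{R}=H(Q)+(1-Q_0)H(P)+4p$, matching the theorem; both sums converge, so the procedure terminates in a finite expected number of bits and rounds. The main obstacle is the geometric lemma that every bad corner cell has its Voronoi face as a diagonal, since this is what simultaneously delivers full-entropy bits from midpoint bisection, exact halving of the bad probability per round, and self-similarity of the surviving sub-cells; once verified by the corner-coordinate computation above, the rest is a geometric-series accounting.
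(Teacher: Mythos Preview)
Your proposal is correct and follows essentially the same approach as the paper: split off the Round-1 cost $H(Q)+(1-Q_0)H(P)$, use the key geometric fact that the Voronoi face is the diagonal of each surviving corner rectangle, and then exploit midpoint bisection in both coordinates for all later rounds. The only cosmetic difference is that the paper encodes the later rounds via the binary expansions of the normalized coordinates (so that halting at the $n$th additional round is the event $B_i\neq C_i$ for $i<n$ and $B_n=C_n$, giving $\Pr=2^{-n}$ directly), whereas you phrase the same thing recursively as ``conditional continuation probability $1/2$ per round''; the resulting sums are identical.
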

\begin{proof}
We assume that an optimum entropy code is used (thus if $U_2=0$, the codeword length is $\log_2(1/Q_0)$ bits). The term $H(Q)+(1-Q_0)H(P)$ in (\ref{eqn:avrateinf}) is the cost of resolving the Round-1 partition. At the conclusion of Round-1, if $X$ belongs to a region which is not error-free, then the average number of bits transmitted is obtained by the following argument. At the conclusion of Round-1, there are two kinds of error rectangles, determined by the sign of the slope of the boundary of $\cV(0)$ in the rectangle. Note that error rectangles are designed so that the boundary of $\cV(0)$ is a diagonal of the corresponding rectangle. Let an error rectangle have length $L$ and height $H$. If the slope is positive, construct the binary expansion  $1-x_1/L=\sum_{i=1}^\infty b_i 2^{-i}$, else construct $x_1/L=\sum_{i=1}^\infty b_i 2^{-i}$. In both cases construct the binary expansion  $x_2/H=\sum_{i=1}^\infty c_i 2^{-i}$. From the independence  and uniformity of $X_1$ and $X_2$ it follows that the bits $B_i$ and $C_i$ are independent unbiased Bernoulli random variables. Further, the algorithm halts after $n$ rounds, with $2n$ total bits communicated if and only if $B_i\neq C_i$, $i<n$ and $B_n=C_n$. Thus, given $X$ in an error rectangle, $Pr(R(X)=2n)=Pr(N(X)=n)=2^{-n}$. The result follows immediately by computing the average.
\end{proof}
\begin{remark}
The communication strategy is implicit in the proof. Note that the finite value for $\bar{R}$ is because of the rapid decrease with $n$  of the probability of halting at $n$ rounds.
\end{remark}
\begin{remark}
This result has  interesting implications when viewed in the context of distributed classification problems. Suppose we have an optimum two-dimensional classifier with separating boundaries that are not axis aligned and also a suboptimal classifier with separating boundaries that are axis aligned, e.g. a $k$-$d$ tree. We expect the communication complexity of refining the approximate rectangular classifier to the optimum classifier to be finite.
\end{remark}

\section{Numerical Results and Discussion}
\label{sec:discussion}
\begin{figure}[htbp] 
   \centering
   \includegraphics[width=2.5in,height=2.0in]{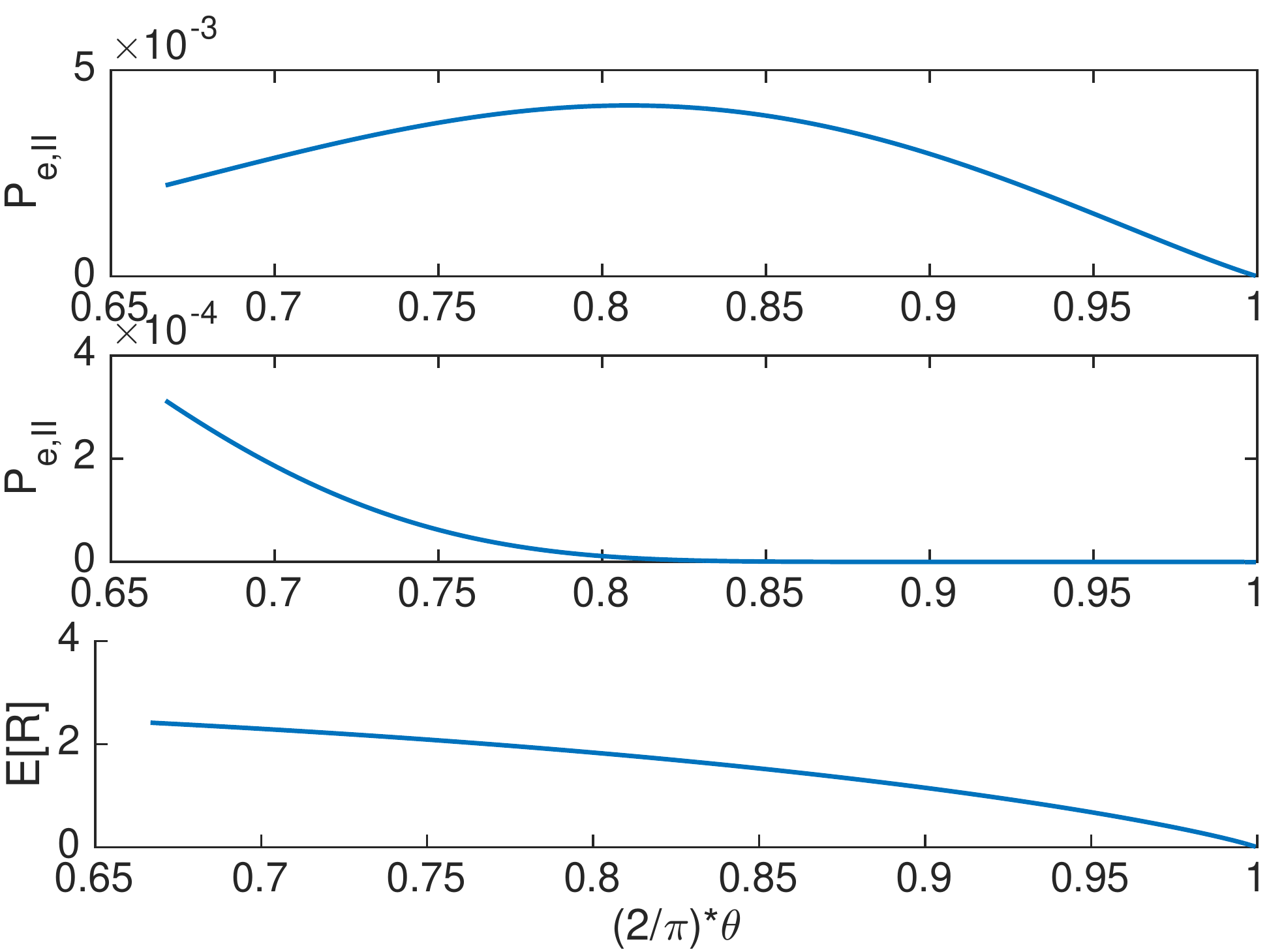} 
   \caption{Variation of $P_{e,II}$ with $\theta$ for the single-round interactive model, 12 (top), 21 (middle) with  $R=4.0$ bits. $\bar{R}=E[R]$ for the infinite-round interactive model  is shown in the bottom panel for $\rho=1$.}
   \label{fig:thetavariation}
\end{figure}
Performance results for all models are summarized in Fig.~\ref{fig:thetavariation}, for $\rho=1$ and $\pi/3 < \theta < \pi/2$.  Under the 1-round interactive model  the hexagonal lattice is not the worst case for the $12$ sequence, but is for the $21$ sequence. The large gap in performance at the same rate for the $12$ and $21$ sequences highlights the importance of selecting the sequence of order in which nodes communicate in this case. Under the infinite round interactive model, the hexagonal lattice is the worst case, with $\bar{R}=2.42$ bits.
\section{Summary and Conclusions}
\label{sec:summary}
For the nearest lattice point problem, we have considered the problem of refining an approximation to the nearest lattice point to obtain the true nearest lattice point, and have obtained the communication cost of doing so. More specifically, we have assumed that the approximate lattice point is obtained using Babai's nearest plane algorithm. The quality of the approximation has been measured by the error probability. 
An interactive communication model has been considered. 
We have shown that the rate of decay  of the error probability is  lattice dependent.  Somewhat surprisingly, the communication cost has been observed  to be finite when an infinite number of communication rounds are possible.

\end{document}